\newcommand{\sysname}{Blind-Touch\xspace}
\newtheorem{theorem}{Theorem}
\title{\sysname: Homomorphic Encryption-Based Distributed Neural Network Inference for Privacy-Preserving Fingerprint Authentication}
\author{
    %Authors
    % All authors must be in the same font size and format.
    % Written by AAAI Press Staff\textsuperscript{\rm 1}\thanks{With help from the AAAI Publications Committee.}\\
    % AAAI Style Contributions by Pater Patel Schneider,
    Hyunmin Choi\textsuperscript{\rm 1,\rm 3}, 
    Simon S. Woo\textsuperscript{\rm 2,\rm 3},
    Hyoungshick Kim\textsuperscript{\rm 3}
}
\title{My Publication Title --- Single Author}
\author {
    Author Name
}
\title{My Publication Title --- Multiple Authors}
\author {
    % Authors
    First Author Name\textsuperscript{\rm 1,\rm 2},
    Second Author Name\textsuperscript{\rm 2,\rm 3},
    Third Author Name\textsuperscript{\rm 1}
}
\begin{document}

\maketitle

\begin{abstract}
Fingerprint authentication is a popular security mechanism for smartphones and laptops. However, its adoption in web and cloud environments has been limited due to privacy concerns over storing and processing biometric data on servers. This paper introduces \sysname, a novel machine learning-based fingerprint authentication system leveraging homomorphic encryption to address these privacy concerns. Homomorphic encryption allows computations on encrypted data without decrypting. Thus, \sysname can keep fingerprint data encrypted on the server while performing machine learning operations. \sysname combines three strategies to efficiently utilize homomorphic encryption in machine learning: (1) It optimizes the feature vector for a distributed architecture, processing the first fully connected layer (FC-16) in plaintext on the client side and the subsequent layer (FC-1) post-encryption on the server, thereby minimizing encrypted computations; (2) It employs a homomorphic encryption-compatible data compression technique capable of handling 8,192 authentication results concurrently; and (3) It utilizes a clustered server architecture to simultaneously process authentication results, thereby enhancing scalability with increasing user numbers. \sysname achieves high accuracy on two benchmark fingerprint datasets, with a 93.6\% F1-score for the PolyU dataset and a 98.2\% F1-score for the SOKOTO dataset. Moreover, \sysname can match a fingerprint among 5,000 in about 0.65 seconds. With its privacy-focused design, high accuracy, and efficiency, \sysname is a promising alternative to conventional fingerprint authentication for web and cloud applications.
\end{abstract}

\section{Introduction}

\emph{Fingerprint authentication} is a biometric method that uses the unique characteristics of an individual's fingerprint. It is favored for smartphones and laptops because of its security and convenience \cite{de2015feel,mare2016study,Lovisotto_2020_CVPR_Workshops,cho2020security}. 
%Traditional fingerprint authentication algorithms rely on minutiae feature points~\cite{wan2006fingerprint, farina1999fingerprint}. However, recent research~\cite{b31:feng2008combining, prabhakar2003learning} has explored the use of deep neural networks, such as convolutional neural networks (CNNs), to develop more precise and robust methods. 
%Feng et al.~\cite{feng2023detecting} proposed a CNN-based fingerprint authentication system that achieved an area under the curve (AUC) of 99.25\% and an equal error rate (EER) of 1.90\% on a contactless PolyU cross-sensor fingerprint database~\cite{lin2018matching}.
However, the adoption of fingerprint authentication in web and cloud environments faces challenges due to the risk of unauthorized access to sensitive biometric data on servers~\cite{rui2018survey}. For example, in June 2015, the US Office of Personnel Management suffered a security breach in which over 5.6 million fingerprint records were stolen, highlighting the dangers of storing biometric data remotely~\cite{gootman2016opm}.

In web and cloud environments, homomorphic encryption (HE)~\cite{gentry2009fully, acar2018survey} is a promising solution for privacy-sensitive applications such as fingerprint authentication. HE allows computations on encrypted data without decryption on a server. Although many research works have adopted this for fingerprint authentication~\cite{b24:kim2020efficient, b25:yang2020secure}, they face challenges due to the significant computational overhead under HE. These challenges primarily arise from complex minutiae representations in fingerprints, which are computationally demanding and prone to variations that can affect the matching process. To overcome these limitations, Engelsma et al.~\cite{engelsma2019learning} introduced a deep learning-based authentication technique using HE. However, even with its enhanced speed, an average authentication time of 3.4 seconds to search among 5,000 fingerprints (considering only the feature vector's encryption time) is not practical for real-world services. Implementing a conventional convolutional neural network (CNN) with HE is challenging due to the costly operations required in the convolution and pooling layers. Specifically, convolution layers necessitate rotations and multiplications for filter application, and pooling layers require a substantial number of multiplications. Additionally, encrypted feature vectors are significantly larger than plaintext feature vectors; therefore, we must do our best to reduce the size of the feature vector, which can consequently lead to a decrease in model accuracy.

To overcome this challenge, we introduce a novel privacy-preserving fingerprint authentication system, \sysname. \sysname employs a distributed deep learning architecture involving clients and a server (see Figure~\ref{figure0}). Clients handle feature extraction through CNN operations on plaintext data while the server performs searching tasks to identify the most suitable fingerprint match with the encrypted feature vector. We optimize the feature vector size to 16, significantly less than the 192 features used in \textit{DeepPrint}~\cite{engelsma2019learning}, by processing the first fully connected layer (FC-16) in plaintext on the client and the second fully connected layer (FC-1) post-encryption on the server. Next, to further enhance performance, we propose a novel compression method compatible with HE to process 8,192 authentication results concurrently. Finally, we implement a clustered architecture to process authentication results simultaneously on multiple servers. To demonstrate the feasibility of \sysname, we built and provided a fully functional cloud-based fingerprint authentication system (\url{https://github.com/hm-choi/blind-touch}).

\begin{figure}[t]
\centerline{\includegraphics[width=1\columnwidth]{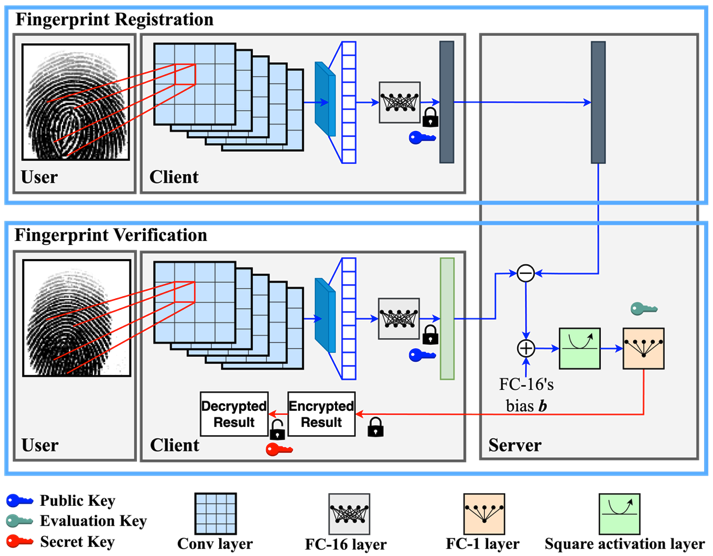}}
    \caption{Overview of \sysname. FC-$N$ refers to a fully connected layer with an output size of $N$.}
    \label{figure0}
\end{figure}

Our main contributions are summarized as follows:

\begin{itemize}
\item \textbf{Design and implementation of \sysname.} We develop a practical distributed HE-based fingerprint authentication system that facilitates efficient and precise neural network inference under HE.

\item \textbf{Demonstration of \sysname's superior recognition accuracy and processing time.} \sysname achieves a 93.6\% F1-score on the PolyU fingerprint dataset~\cite{lin2018matching} and a 98.2\% F1-score on the Sokoto fingerprint dataset~\cite{b1:shehu2018sokoto}. Additionally, it maintains an average search time of 650 milliseconds to identify a match within a pool of 5,000 fingerprints.

\item \textbf{Formal security analysis of \sysname.} We provide a formal security analysis demonstrating that no probabilistic polynomial-time adversary can extract information about the encrypted fingerprint features processed by \sysname in the chosen plaintext attack (IND-CPA) threat model. 

\end{itemize}

\section{Background} 

\subsection{Homomorphic Encryption (HE)} 

Homomorphic encryption (HE) is an encryption scheme that allows third parties, such as cloud service providers, to perform computations on encrypted data without decryption. If $m_1$ and $m_2$ are messages, $Enc$ denotes the homomorphic encryption function, and $f$ and $f'$ represent computationally feasible functions for ciphertext and plaintext inputs, then homomorphic encryption ensures $f(Enc(m_1), Enc(m_2)) = Enc(f'(m_1, m_2))$.

Researchers have introduced various homomorphic encryption algorithms over the years. For instance, Brakerski, Gentry, and Vaikuntanathan introduced the BGV algorithm \cite{b9:brakerski2014leveled, b10:brakerski2012fully, b11:fan2012somewhat}, supporting integer-based arithmetic operations like addition and multiplication. 
The more recent CKKS algorithm \cite{b13:cheon2017homomorphic, b14:cheon2018bootstrapping} supports floating point number-based arithmetic operations, making it more compatible with statistical and deep learning algorithms \cite{b15:clet2021bfv}. Thus, we selected the CKKS scheme for our \sysname.

The CKKS method is a public-key encryption technique involving a public and a secret key. The public key includes an encryption key for encrypting floating point vectors and an evaluation key for homomorphic operations on ciphertexts. The secret key is reserved for decryption. 

\subsection{Siamese Neural Network with CNNs} \label{sec:Siamese Neural Network with CNNs}

The Siamese neural network, a deep learning structure, measures the similarity between two inputs using two identical sub-networks. Commonly used in image-matching tasks such as fingerprint authentication~\cite{b23:chowdhury2020can} and face authentication~\cite{song2019occlusion, wu2017face}, this network generally comprises two or more CNN models as sub-networks with shared weights in the combining layer. The similarity between two input images is determined using the difference between the feature vectors computed from the sub-networks. \sysname employs a Siamese neural network to compute the difference between individuals' fingerprint images.

\subsection{Fingerprint Authentication with HE}
HE has been regarded as an effective means for confidentially storing and processing sensitive data on the server. While prior methods using HE for fingerprint authentication~\cite{b24:kim2020efficient, b25:yang2020secure} have employed basic filter-based models for streamlined processing, they did not achieve the accuracy performance compared to the state-of-the-art CNN-based techniques with plaintext images. Incorporating a CNN model into HE poses significant challenges due to computationally expensive operations such as convolution layers under HE. 

Recent works~\cite{gilad2016cryptonets, ndss2023} have shown the potential of integrating HE with deep neural networks to devise both efficient and privacy-conscious machine learning models. Dowlin et al.~\cite{gilad2016cryptonets} introduced CryptoNets, which showcases the adaptability of neural networks to encrypted data and underscores the alterations required for compatibility with HE. Similarly, Folkerts et al.~\cite{ndss2023} proposed a framework that expanded the design of HE-driven private machine learning inference. However, their approaches fall short of accommodating CNN models with floating point parameters, which are essential for biometric verification.  

Also, Engelsma et al.~\cite{engelsma2019learning} presented DeepPrint, a deep learning-based fingerprint authentication technique that uses a fixed-length representation. DeepPrint aligns the input fingerprint, extracts a 192-dimensional texture and minutiae combination, and compresses it from floating point numbers to a 200-byte integer format. However, its 3.4-second average authentication time for 5,000 fingerprints is not viable for real-world web or cloud platforms that require quick verification. In \sysname, we address such challenges by employing the CKKS scheme to accommodate a CNN with floating point parameters. We also incorporate the following three different strategies to reduce the computational and storage overheads of HE: distributed architecture, data compression, and cluster architecture. These techniques make \sysname a practical solution for real-world fingerprint authentication.

\section{Overview of \sysname}

We present \sysname, a fingerprint authentication system that facilitates efficient inference using a Siamese neural network under HE in a distributed setup (see Figure \ref{figure0}). To reduce the computational load of HE, we optimized the network architecture, placing all convolutional layers on the client side, with only a fully connected layer and a square activation layer on the server side. The client captures a fingerprint and processes it using CNN operations and a fully connected layer with an output size of 16 (FC-16), then encrypts the resulting feature vector with the public key to preserve the confidentiality of the raw feature vector. We minimized the feature vector size while retaining the accuracy of fingerprint authentication, thus maximizing the number of feature vectors that can be stored in a single ciphertext. The size of 16 for the feature vector enables the storage of up to 512 fingerprint feature vectors in a single ciphertext with 8,192 slots, allowing for simultaneous comparison of an individual's fingerprint feature information against 512 registered users. The encrypted vector is then sent to the server for verification. Upon receipt, the server computes the difference between the incoming encrypted feature vector and each registered user's stored encrypted vector. The server computes the encrypted feature vector using the evaluation key. A fully connected layer with an output size of 1 (FC-1) and a square activation layer are subsequently applied to this differential input. The resulting values are relayed back to the client. Utilizing the sigmoid operation, the client decrypts the received data with the secret key and determines the highest match probability for a registered user's feature vector. By comparing this probability against a predetermined threshold, the client checks whether the provided fingerprint image matches the registered user's image. Theorem \ref{thm:theorem2} demonstrates that this design yields equivalent results to processing FC-16 and FC-1 post-encryption.

\sysname's sub-CNN network has a feature size of 16, derived from the FC-16 layer. Originally, the sub-CNN network had five CNN layers, with the FC-16 layer following the subtraction. In this setup, the feature vector's size could increase to 25,088. However, we discovered the FC-16 layer can be computed before encryption. Theorem \ref{thm:theorem2} proves that for any linear function \( f \) defined as \( f(x) = xA+b \) and any homomorphic function \( h \), the following is true:
\[ f(h(x_{1}) - h(x_{2})) = h(f(x_{1})) - h(f(x_{2})) + b \]
The function \( f \) can be applied to the encrypted data before decryption. This implies that the FC-16 layer can be computed pre-encryption, reducing the feature vector's size to 1. After subtraction, the bias \( b \) of the FC-16 layer is added to the encrypted data. Consequently, \sysname achieves the same accuracy as the original configuration while reducing computational cost and data encryption requirements.

\begin{theorem}
\label{thm:theorem2}
Let \( f \) be a linear function defined as \( f(x) = xA + b \) and \( h \) be a homomorphic function. For any \( x_{1} \), \( x_{2} \), and floating-point vector \( b \), the following holds:
\[ f(h(x_{1}) - h(x_{2})) = h(f(x_{1})) - h(f(x_{2})) + b \]
\end{theorem}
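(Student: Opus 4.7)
The plan is to prove the identity by direct algebraic manipulation, invoking two properties that the homomorphic function $h$ (instantiated by CKKS in \sysname) supports: additivity, $h(u+v) = h(u) + h(v)$, and compatibility with multiplication by a plaintext weight matrix, $h(uA) = h(u) \cdot A$. Addition of a plaintext bias to a ciphertext is understood as the standard plaintext-ciphertext add operation, and analogously for multiplication by $A$; these are precisely the evaluation-key operations performed by the server.

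First I would expand the right-hand side. Substituting $f(x_i) = x_i A + b$ gives $h(x_1 A + b) - h(x_2 A + b) + b$. Applying additivity to each encryption splits it into $h(x_i A) + h(b)$, so the two copies of $h(b)$ cancel under the subtraction, leaving $h(x_1 A) - h(x_2 A) + b$. Using the plaintext-multiplication property then yields $h(x_1)\, A - h(x_2)\, A + b = (h(x_1) - h(x_2))\, A + b$.

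Next I would evaluate the left-hand side directly: applying the definition of $f$ to the ciphertext $h(x_1) - h(x_2)$ produces $(h(x_1) - h(x_2))\, A + b$, which already matches the simplified right-hand side. The two expressions therefore denote the same ciphertext and the identity holds.

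The only subtle step, and hence the main obstacle, is pinning down what "homomorphic function" is supposed to mean in this mixed ciphertext-plaintext setting. One must verify that $h$ genuinely commutes with right-multiplication by the plaintext matrix $A$ (as guaranteed by CKKS plaintext-ciphertext multiplication) and that the bias $b$ plays two consistent roles: as an ingredient encrypted inside $f$ on the right-hand side, it contributes $h(b)$ which cancels under subtraction, while as the plaintext term added back at the end it contributes $b$ directly to the ciphertext. Once these conventions are fixed, the calculation reduces to the cancellation of the $h(b)$ terms, and the practical consequence is the one already advertised: the FC-16 layer can be evaluated on the client in plaintext, with its bias merely folded back in after the encrypted subtraction on the server.
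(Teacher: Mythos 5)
Your proposal is correct and uses essentially the same argument as the paper: both rest on the additivity of \( h \), the compatibility \( h(xA) = h(x)A \), and the cancellation of the \( h(b) \) terms under subtraction; the paper simply runs the chain from the left-hand side to the right in one pass, whereas you simplify both sides to the common form \( (h(x_1) - h(x_2))A + b \). Your extra remark about making the mixed plaintext--ciphertext conventions explicit is a fair observation, but it does not change the substance of the proof.
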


\begin{proof}
Starting from the left side:
\begin{align*}
f(h(x_{1}) - h(x_{2})) &= (h(x_{1}) - h(x_{2}))A + b \\
&= h(x_{1})A - h(x_{2})A + b \\
&= h(x_{1}A + b) - h(x_{2}A + b) + b \\
&= h(f(x_{1})) - h(f(x_{2})) + b
\end{align*}
\end{proof}

For training, we utilize a publicly available fingerprint image dataset. Using pairs of same-user and different-user fingerprints from this dataset, \sysname can be trained to accurately identify and match fingerprints. Notably, this training is conducted using plaintext images as we employ publicly available fingerprint images, not specific individuals' private fingerprint data. Once trained, the network is repurposed for authentication of registered users.

Next, we describe fingerprint registration and authentication procedures in detail. % , respectively. 

\subsection{Key Generation and Distribution}
A system administrator oversees multiple client devices with fingerprint scanners. Utilizing a key generator, the administrator generates four unique keys grouped into three types: the \emph{public key} for encrypting fingerprint data; the Galois key for rotating ciphertext; the relinearization key for reducing ciphertext size after multiplication; and the \emph{secret key} for data decryption. This paper terms the Galois and relinearization keys collectively as the \emph{evaluation key}. The administrator securely embeds the public and secret keys on client devices. Subsequently, the public and evaluation keys are relayed to authentication servers via a secure channel. Administrators can then efficiently establish keys on their managed devices and register the public keys with the server, adhering to a standard key setup protocol.

\subsection{Fingerprint Registration}
When a user (denoted by $u$) registers her fingerprint with \sysname, it begins by extracting and processing the feature vector of the fingerprint image using the CNN model's layers on the client side. The final layer of the CNN model on the client side is a fully connected layer (FC-16), which generates a 16-element feature vector denoted by $<u_1, u_2, \cdots, u_{16}>$. We encrypt the feature vector using the client's secret key to protect the feature vector from the server. During fingerprint registration, the client creates a ciphertext $C_u$ containing the user's encrypted feature vector in its first 16 elements. The remaining space in the ciphertext is filled with zeros, as illustrated in Figure~\ref{figure7}. Finally, the client sends $C_u$ to the server, along with the user's unencrypted identity information ${ID}_{u}$.

\begin{figure}[htb!]
\centerline{\includegraphics[width=1\columnwidth]{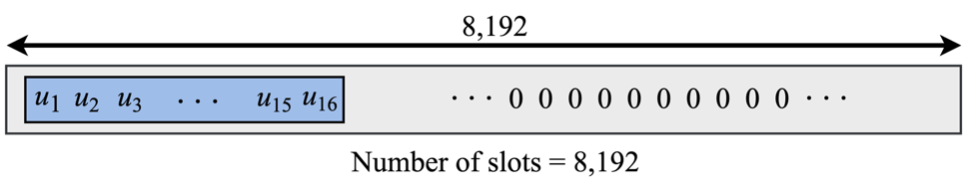}}
\caption{Ciphertext $C_u$ containing the user $u$'s encrypted feature vector for fingerprint registration.}
\label{figure7}
\end{figure}

In \sysname, the server stores encrypted feature vectors of registered users in a dedicated ciphertext, denoted as $C_r$. Since each feature vector has a size of 16, and $C_r$ can hold up to 8,192 available slots, $C_r$ can store a maximum of 512 feature vectors. When a user $u$ registers, the server searches for 16 consecutive empty slots in $C_r$ to store her encrypted feature vector. Assuming that the number of registered users is less than or equal to 512, the server can always find sufficient empty slots to store the user u's feature vector. Suppose $m$ users have already registered, and $u$ is the ($m$+1)th user. To store $u$'s encrypted feature vector in $C_r$, the server first stores the unencrypted user identity information ${ID}_u$ in a plaintext database with the index $m$+1. Then, the server rotates $C_u$ by ($m$+1)×16 positions to the right to align it with the empty slots in $C_r$. This is done to ensure that the encrypted feature vectors of all registered users are stored in contiguous blocks in $C_r$. Finally, the rotated $C_u$ is added to $C_r$, as illustrated in Figure~\ref{figure9}.

\begin{figure}[t]
\centerline{\includegraphics[width=1\columnwidth]{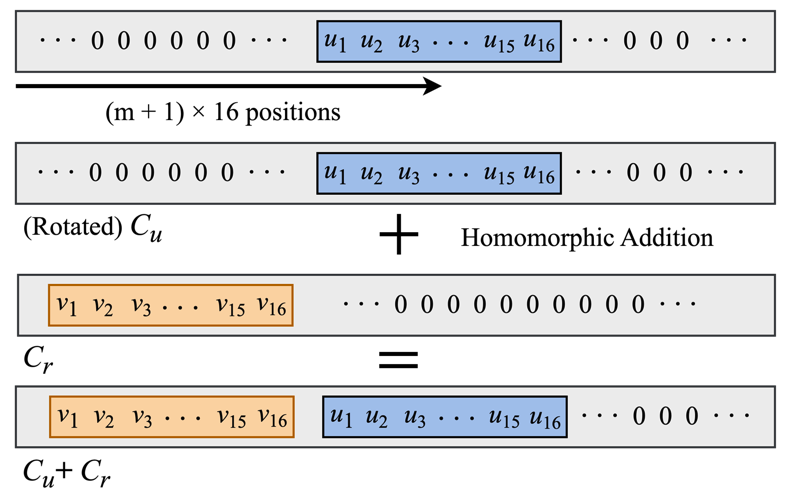}}
\caption{Addition of the rotated $C_u$ and $C_r$.}
\label{figure9}
\end{figure}

\subsection{Fingerprint Authentication}
\label{sec:Fingerprint Authentication}

When user $u$ attempts to authenticate using a new fingerprint image, the feature vector of this image is extracted and processed on the client side through the CNN model's layers, mirroring the fingerprint registration phase. This yields a 16-element feature vector, $<\hat{u}_1, \hat{u}_2, \cdots, \hat{u}_{16}>$, from the FC-16 layer of the CNN model. For encryption, the client produces a ciphertext $C_u$ comprising the user's encrypted feature vector $<\hat{u}_1, \hat{u}_2, \cdots, \hat{u}_{16}>$, replicating this vector 512 times. The client then forwards $C_u$ to the server for authentication.

The server-side computations are sequentially represented in Figure \ref{figure10} as follows: (1) Upon receiving $C_u$, the server attempts to compare $C_u$ with $C_r$; (2) the server performs the subtraction of $a = C_r - C_u$; (3) the server computes the addition of the bias $b$ to the $a$; 
(4) the server performs the square function necessitating a multiplication; (5) the server conducts FC-1 layer operations necessitating a multiplication. Note that traditionally executing the fully connected layer operations would entail iterative rotations and additions after multiplying the result of the previous layer with the coefficients of the FC-1 layer. To avoid these multiplications, \sysname repeatedly puts the identical feature vector $<\hat{u}_1, \hat{u}_2, \cdots, \hat{u}_{16}>$ into $C_r$. Consequently, this authentication method requires only two multiplications.

\begin{figure}[t]
\centerline{\includegraphics[width=1\columnwidth]{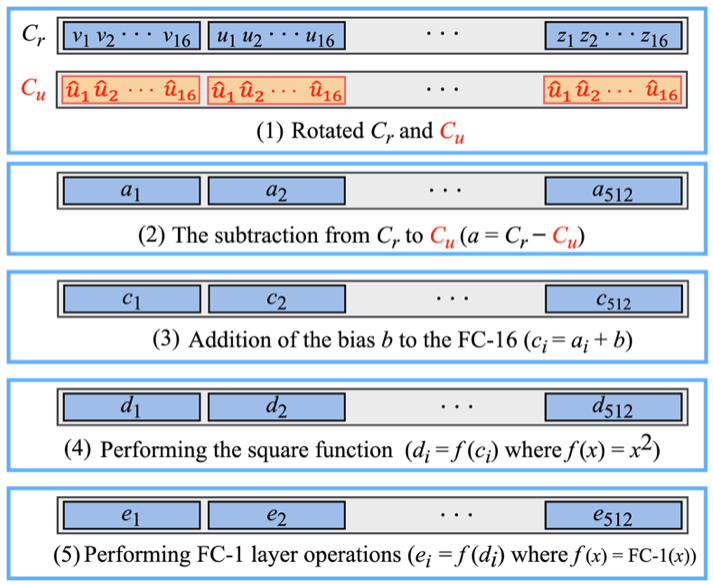}}
\caption{Server-side computations for authentication.}
\label{figure10}
\end{figure}

In \sysname, the $sigmoid$ function is used to compute the probability of matching the feature vector of each registered user with the new fingerprint image's feature vector. However, the $sigmoid$ operation is computationally expensive with HE. To address this challenge, \sysname offloads the $sigmoid$ function from the server to the client. After the server performs the fully connected layer operations, it transmits the results to the client in an encrypted form. The client then efficiently performs the $sigmoid$ operation after decrypting the results. Finally, the client compares the computed probability with a predefined threshold to determine whether the given fingerprint image matches a registered user's fingerprint image. This approach significantly reduces the computational burden on the server, making the authentication process more efficient.

\subsection{Compression Method}
\label{sec:Compression Method}

If the number of registered users ($N$) on the server surpasses 512, without compression, $k (= \lceil N/512 \rceil)$ ciphertexts are necessary to relay the results for all $N$ users because a single ciphertext can only encapsulate the authentication results for a maximum of 512 registered users. This can lead to a significant increase in the authentication time.
%This approach can linearly increase the authentication time with the number of ciphertexts ($k$) used.

%delivering $N$ authentication results to the client can increase the overall authentication time to deliver multiple ciphertexts. 

To address this issue, we propose a compression method, a new technique that consolidates multiple authentication results into a single ciphertext. To compress the results, the server multiplies a one-hot vector with the first ciphertext. The one-hot vector sets the first element of each registered user's feature vector to 1 and fills the remaining elements with zeros (see Figure~\ref{figure12}). The next ciphertext is rotated by one and added to the resulting ciphertext to create a single ciphertext containing the summation result. By repeating this process with subsequent ciphertexts, we can simultaneously transmit up to 8,192 authenticated results to the client using just one ciphertext. The overall compression process is illustrated in Figure~\ref{figure12}.

\begin{figure}[t]
\centerline{\includegraphics[width=1\columnwidth]{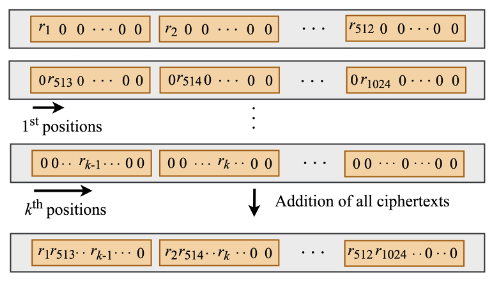}}
\caption{Compression of the authentication results.}
\label{figure12}
\end{figure}

Upon receiving the ciphertext containing the authentication result, the client decrypts and applies the \emph{sigmoid} function to each element of the decrypted vector. The client then looks for the element's index that exceeds the threshold. The corresponding index divided by 16 yields the quotient ($q$) and the remainder ($r$). The index of the original ciphertext stored on the server can be computed as 512$\cdot r$+$q$. This process only requires one additional multiplication and a rotation while significantly reducing the number of ciphertexts transmitted to the client and the authentication time.

\subsection{Cluster Architecture}

The compression method proposed for \sysname facilitates the concurrent processing of up to 512 fingerprint authentications. Consequently, the overall authentication time increases with $\lceil N/512 \rceil$, where $N$ is the number of registered users. If $N$ is considerably large, this can lead to potentially slow authentication times.

One viable strategy to expedite the authentication time in \sysname is to distribute and store the fingerprints across multiple servers and then process them independently and in parallel. We refer to this strategy as the ``\emph{cluster architecture}.'' The proposed cluster architecture is illustrated in Figure~\ref{figure:overall_cluster_architecture}. The server infrastructure consists of two distinct components: the main server and the cluster servers. Upon receiving the encrypted feature vector from the client, the main server relays it to the appropriate cluster servers. Within this structure, the registered (and encrypted) feature vectors are sequentially indexed and stored on each cluster server. 

\begin{figure}[t]
\centerline{\includegraphics[width=1\columnwidth]{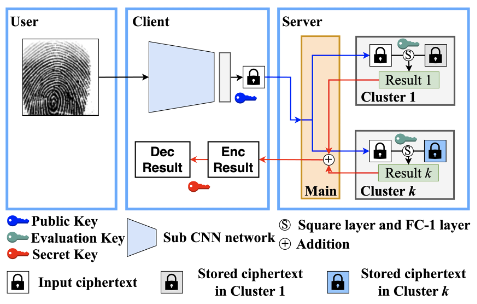}}
\caption{Cluster architecture for \sysname.}
\label{figure:overall_cluster_architecture}
\end{figure}

For illustration, imagine a situation where 1,536 fingerprints are dispersed across three cluster servers. The initial 512 fingerprints would be stored in Cluster 1, the subsequent set, ranging from the 513th to the 1024th fingerprints, would reside in Cluster 2, and the final set would reside in Cluster 3. Each cluster undertakes encrypted fingerprint computations and executes a leftward rotation, ensuring no data overlap. Once computations are complete, the main server collects all resulting ciphertexts into a singular ciphertext and transmits it back to the client. Leveraging this parallel processing mechanism can substantially reduce the authentication time.

\subsection{Implementation of \sysname}
\label{sec:encryption_scheme}

The Siamese network in \sysname consists of two sub-CNN networks that share the same structure and weights. Each CNN network extracts the feature vectors from fingerprint images. The input size of the CNN is $224\times 224$. The CNN architecture is composed of five convolutional layers. Each layer has the same structure except for the size of the output channel. The $i$th convolutional layer has $32 \times i$ number of output channels with the same padding and the size of the stride is one. After the convolutional layer, a BatchNormalization, Swish activation, and MaxPooling layer are applied. To prevent overfitting, we use 40\% dropout during training. The subtracted two CNN architecture is used as an input layer for FC-16. The activation function for FC-16 is a square function, which enables a symmetricity of two CNN networks. Finally, FC-1 and the $sigmoid$ function are used to output the authentication decision.

We adopted the CKKS scheme~\cite{b13:cheon2017homomorphic, b14:cheon2018bootstrapping} because of its efficiency in arithmetic operations over floating-point numbers~\cite{b15:clet2021bfv}. We used the SEAL-Python library (\url{https://github.com/Huelse/SEAL-Python}). In the CKKS scheme, \textit{depth} is predetermined during key configuration. Ciphertext size increases with increasing \textit{depth}. Table~\ref{table:size_of_ciphertext_by_depth}. shows the ciphertext sizes relative to \textit{depth}. To improve \sysname's efficacy, it is important to reduce the ciphertext size, which necessitates minimizing \textit{depth}. We set the \textit{depth} to only support three multiplications, which matches the number of multiplications needed for \sysname.

\begin{table}[t]
\begin{center}
\centering\normalsize
\renewcommand{\arraystretch}{1}
\begin{tabular}{|c|c|c|c|c|c|}
\hline
\emph{depth} & 1 & 2 & 3 & 4 & 5\\
\hline
\hline
Size (KB) & 459 & 658 & 855 & 1,075 & 1,280\\
\hline
\end{tabular}
\end{center}
\caption{Ciphertext sizes according to depth with $d$=16,384 and a log scale factor of 40 in the SEAL-Python library.}
\label{table:size_of_ciphertext_by_depth}
\end{table}

\begin{table}[t]
\begin{center}
\centering\normalsize
\renewcommand{\arraystretch}{1}
\begin{tabular}{|c|c|c|}
\hline
Key and Ciphertext & Size \\
\hline
\hline
Public key & 1.1MB \\
\hline
Galois key & 117MB \\
\hline
Relinearization key & 4.5MB \\
\hline
Secret key  & 559KB \\
\hline
Ciphertext  & 856KB \\
\hline
\end{tabular}
\end{center}
\caption{Sizes of different keys and ciphertext.}
\label{table:size_of_key_pair_and_ciphertext}
\end{table}

Given polynomial modulus degree $d$=16,384, the maximum number of encryptable elements is 8,192 (denoted as \textit{number of slots}). Table~\ref{table:size_of_key_pair_and_ciphertext} presents the sizes of the public key, Galois key, relinearization key, ciphertext, and the number of slots for the parameter settings $d$=16,384, the coefficient modulus is 240, and the log scale factor is 40 using the SEAL-Python library, which are computationally equivalent to 192-bit security in modern symmetric key encryption algorithms~\cite{rahulamathavan2022privacy}. These parameter configurations support up to three multiplications.

\section{Evaluation}
\label{sec:Experiments_and_results}
To evaluate the performance of \sysname, we perform experiments on two publicly available popular benchmark fingerprint datasets: the PolyU Cross Sensor Fingerprint Database~\cite{lin2018matching} and the SOKOTO Conventry fingerprint dataset~\cite{b1:shehu2018sokoto}. Figure~\ref{figure:Set_of_fingerprint_images} shows some sample fingerprint images from each dataset.

\begin{figure}[t]
\centerline{\includegraphics[width=1\columnwidth, height=0.3\linewidth]{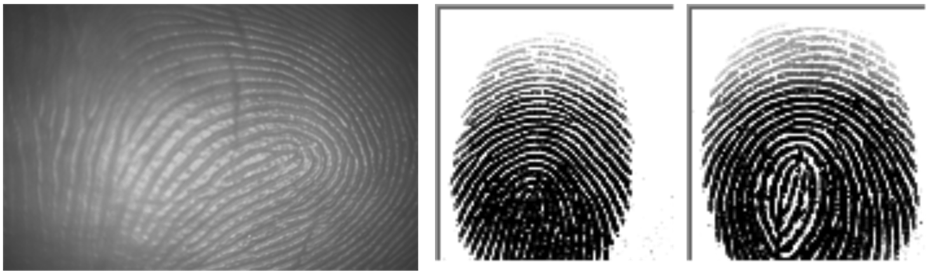}}
\caption{Samples from the PolyU dataset (on the left) and the SOKOTO dataset (on the right).}
\label{figure:Set_of_fingerprint_images}
\end{figure}

\textit{PolyU Cross Sensor Fingerprint Database} (\textbf{PolyU}): This dataset contains contact-based and contactless-2D fingerprint data. We use a processed version of the contactless dataset, which has two sessions. The first session includes 336 subjects, each with 6 fingerprint images, while the second session has 160 subjects, each with 6 images. From the 496 subjects, we choose 296 for training and the remaining 200 for testing, aligning with the dataset configuration in~\cite{feng2023detecting}. The test set consists of 3,000 genuine and 19,900 imposter pairs.

\textit{SOKOTO Coventry Dataset} (\textbf{SOKOTO}): This dataset comprises 6,000 fingerprint images, each measuring 96 $\times$ 103 pixels. We randomly partition this dataset into 3,600 fingerprints for training, 1,200 for validation, and 1,200 for testing. Each fingerprint image has a unique label, so we use one-shot learning~\cite{b5:koch2015siamese}. One-shot learning is a machine learning classification technique that involves training with only one example per class. Given the singular label in the SOKOTO dataset, we employ the pre-processing methods outlined in Table~\ref{table:alternation_for_sokoto_dataset} to create genuine and imposter pairs. To form 1,200 genuine pairs, we apply each pre-processing technique to each fingerprint image. Subsequently, we select 8,400 imposter pairs, maintaining a 1:7 ratio between genuine and imposter datasets, which is similar to the PolyU dataset. Examples of the test samples are illustrated in Figure~\ref{figure:socoto_dataset_pre_processing}.

\begin{figure}[t]
\centerline{\includegraphics[width=1\columnwidth]{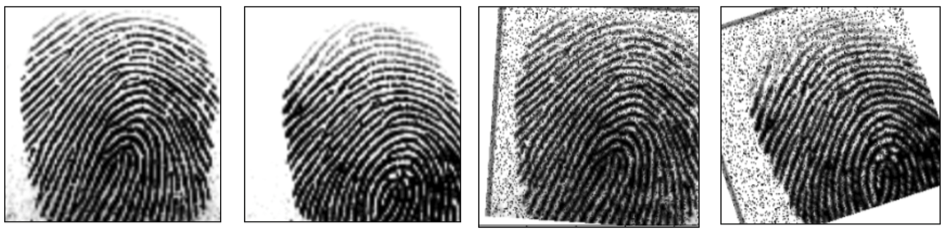}}
\caption{Test samples from the SOKOTO dataset: Original image (on the left) and its corresponding pre-processed versions (on the right).}
\label{figure:socoto_dataset_pre_processing}
\end{figure}

\begin{center}
\begin{table}[b]
% \centering\small
\centering\small
\setlength{\tabcolsep}{5pt}
\renewcommand{\arraystretch}{1}
\begin{tabular}{|c|c|c|c|}
\noalign{\smallskip}\noalign{\smallskip}\hline
Dataset & Train & Validation & Test \\
\hline\hline
Dropout (\%) & $0.01 \sim 0.15$ & $0.01 \sim 0.15$ & $0.01 \sim 0.15$ \\
\hline
Scaling (\%) &  $90 \sim 110$ & $90 \sim 110$ & $90 \sim 110$ \\
\hline
Translation (\%) & $-10 \sim 10 $ & $-10 \sim 10 $ & $-10 \sim 10 $  \\
\hline
Rotation ($^{\circ}$) &  $-30 \sim 30$ &$-30 \sim 30$ &$-30 \sim 30$\\
\hline
Gaussian blur  & sigma = 0.7 & sigma = 0.7 & sigma = 0.7 \\
\hline
\end{tabular}
\caption{Pre-processing to generate test samples in the SOKOTO dataset with a dropout rate of 0.5 per channel.}
\label{table:alternation_for_sokoto_dataset}
\end{table}
\end{center}   
 
\subsection{Experimental Setup}

We perform experimental evaluations using the metrics (Accuracy, F1-score, AUC score, and EER score) to understand the performance of \sysname for real-world applications.

The overarching architecture of our system comprises three crucial components: 1 client, 1 main server, and 3 cluster servers. Five servers are used in service, each being a `Standard-g2 Server' sourced from NAVER Cloud. Each server is powered by two cores (Intel(R) Xeon(R) Gold 5220 CPU @ 2.20GHz) and equipped with 8GB of memory. The chosen servers perform similarly to standard personal computers, demonstrating the broader applicability of our study results. We incorporate the Python-based Flask (\url{https://github.com/pallets/flask}) framework to ensure seamless interactivity between these components. 

We choose the hyperparameters (Epoch = 150, Adam optimizer) for optimizing the CNN model in \sysname through experiments.

\subsection{Authentication Accuracy}
In \sysname, the authentication result from the $sigmoid$ function ranges between 0 and 1. A threshold differentiates fingerprints from the same individual and those from different individuals. If the $sigmoid$ outcome exceeds this threshold, the fingerprint images are inferred to be from the same finger. The optimal threshold, dataset-specific and adaptively determined using a validation set, varies. The accuracy and F1-score of \sysname at various thresholds are presented in Table~\ref{table:score_of_polyU_dataset} and~\ref{table:score_of_sokoto_dataset}.

\begin{center}
\begin{table}[t]
\centering\small
\renewcommand{\arraystretch}{1}
\begin{tabular}{|c|c|c|c|c|c|c|}
\noalign{\smallskip}\noalign{\smallskip}\hline
Threshold & 0.2 & 0.1 & 0.05 & 0.02 & 0.01 & 0.005\\
\hline\hline
Accuracy & \textbf{98.3} & 98.3 & 98.2 & 98.0 & 98.0 & 97.6\\
\hline
F1-score & \textbf{93.8} & 93.6 & 93.3 & 92.6 & 92.0 & 91.3\\
\hline
AUC score & 96.8 & 97.1 & 97.3 & 97.4 & 97.5 & \textbf{97.5}\\
\hline
EER score & 5.1 & 4.5 & 3.8 & 3.4 & 2.8 & \textbf{2.5}\\
\hline
\end{tabular}
\caption{Authentication accuracy of \sysname on the PolyU dataset (presented as \%).}
\label{table:score_of_polyU_dataset}
\end{table}
\end{center}

\begin{center}
\begin{table}[t]
\centering\small
\renewcommand{\arraystretch}{1}
\begin{tabular}{|c|c|c|c|c|c|c|}
\noalign{\smallskip}\noalign{\smallskip}\hline
Threshold & 0.2 & 0.1 & 0.05 & 0.02 & 0.01 & 0.005\\
\hline\hline
Accuracy & 99.2 & 98.7 & 99.4 & 99.5 & \textbf{99.5} & 99.2\\
\hline
F1-score & 97.0 & 97.7 & 97.8 & 98.0 & \textbf{98.2} & 97.0\\
\hline
AUC score & 97.8 & 98.6 & 99.0 & 99.2 & \textbf{99.4} & 99.4\\
\hline
EER score & 4.0  & 2.3  & 1.5  & 1.2  & \textbf{0.7}  & 0.8\\
\hline
\end{tabular}
\caption{Authentication accuracy of \sysname on the SOKOTO dataset (presented as \%).}
\label{table:score_of_sokoto_dataset}
\end{table}
\end{center} 

The F1-score of \sysname on the PolyU dataset peaks at 93.8\% with a threshold of 0.2. On the SOKOTO dataset, the best F1-score of 98.2\% is achieved with a threshold of 0.01. However, between thresholds of 0.2 and 0.005, \sysname consistently maintains a high F1-score, exceeding 91\% on the PolyU dataset and surpassing 97\% on the SOKOTO dataset. These results suggest that \sysname's authentication accuracy is not overly sensitive to threshold adjustments and guarantees high authentication accuracy even with a smaller feature vector size.

To demonstrate that \sysname can maintain sufficient authentication accuracy when processing encrypted fingerprint data and remain competitive with state-of-the-art fingerprint authentication solutions designed for plaintext images, we compared the AUC and EER scores on the PolyU dataset against established solutions. Table~\ref{table:score_of_polyU_dataset2} exhibits these comparison results. \sysname achieved an AUC score of 97.5\%, which is only 1.8\% less than the leading results of ContactlessMinuNet~\cite{zhang2021multi} and MinNet~\cite{feng2023detecting}. For EER, \sysname registered 2.5\%, which is 0.6\% higher than the other two methods. Our findings suggest that \sysname serves as a viable alternative for applications sensitive to privacy.

\begin{center}
\begin{table}[t]
\centering\normalsize
\renewcommand{\arraystretch}{1}
\centering\resizebox{\linewidth}{!}{
\begin{tabular}{|c|c|c|}
\noalign{\smallskip}\noalign{\smallskip}\hline
Method & AUC (\%) & EER (\%) \\
\hline\hline
MNIST mindtct~\cite{ko2007user} & 58.9 & 36.9 \\
\hline
% MinutiaeNet \cite{nguyen2018robust} & 92.0 & 13.4 \\
\makecell[t]{ MinutiaeNet \\ \cite{nguyen2018robust}} & \multirow{2}{*}{92.0} & \multirow{2}{*}{13.4} \\
\hline
VeriFinger (paid software) & 98.2 & 3.0 \\
\hline
% ContactlessMinuNet \\ \cite{zhang2021multi} & \textbf{99.3} & \textbf{1.9} \\
\makecell[t]{ContactlessMinuNet \\ \cite{zhang2021multi}} & \multirow{2}{*}{\textbf{99.3}} & \multirow{2}{*}{\textbf{1.9}} \\
\hline
MinNet~\cite{feng2023detecting} & \textbf{99.3} & \textbf{1.9} \\
\hline
\sysname (Ours) & 97.5 & 2.5 \\
\hline
\end{tabular}}
\caption{Comparison of the AUC and ERR scores for the PolyU dataset with the state-of-the-art fingerprint authentication solutions (in plaintext fingerprints).}
\label{table:score_of_polyU_dataset2}
\end{table}
\end{center} 

\subsection{Rank-1 Accuracy}
To evaluate the feasibility of \sysname for 1:N matching, we measured its Rank-1 accuracy on the distorted SOKOTO dataset, achieving 91\%. This demonstrates \sysname's potential for extension to 1:N matching tasks.
 
\subsection{Execution Time and Storage Performance} 

We analyze the time and storage performance of \sysname in comparison with the state-of-the-art solution, \textit{DeepPrint}~\cite{engelsma2019learning}. We use cosine similarity for \textit{DeepPrint}, following the parameters detailed in~\cite{engelsma2019learning}. Since \textit{DeepPrint} does not provide code to generate feature vectors from fingerprints, we utilize the pre-generated 5,000 feature vectors available in their open-source project. To ensure a fair comparison, we conduct experiments using 5,000 fingerprints extracted from the PolyU dataset, allowing for duplicates.

\begin{center}
\begin{table}[t]
\centering\normalsize
\renewcommand{\arraystretch}{1}
\centering\resizebox{\linewidth}{!}{
\begin{tabular}{|c|c|c|c|c|}
\noalign{\smallskip}\noalign{\smallskip}\hline
Scheme & Input & Enc  & Auth & Dec \\
\hline\hline
\textit{DeepPrint} w/ CKKS & 62  & 2,635 & 772 & 4 \\
% \hline
% Output size (MB) & 62MB  & 54MB & 2.14MB\\
\hline
\textit{DeepPrint} w/ BFV & 54 & 1,960 & 4,297 & \textbf{1}\\
\hline
\sysname (Ours)& \textbf{0.8} & \textbf{18} & \textbf{485} & 8\\
\hline
\end{tabular}}
\caption{\sysname vs. \textit{DeepPrint} in input size (MB), encryption (Enc.), authentication (Auth.), and decryption (Dec.) times ($ms$).}
\label{table:time_compare_to_deepprint}
\end{table}
\end{center} 

Table~\ref{table:time_compare_to_deepprint} presents the comparative results between \sysname and two versions of \textit{DeepPrint} (CKKS and BFV). Experimental results show that \sysname substantially outperforms \textit{DeepPrint} in both execution time and storage performance, except for the decryption task. The input vector size for \sysname is approximately 68 times smaller than that of \textit{DeepPrint} w/ BFV. Moreover, encryption and authentication times are about 109 times and 9 times faster, respectively. Even when compared to the relatively faster authentication time of \textit{DeepPrint} w/ CKKS, \sysname is around 1.6 times faster. A drawback of \sysname is its decryption time, which is roughly 8 times and 4 times slower than \textit{DeepPrint} w/ BFV and \textit{DeepPrint} w/ CKKS, respectively. However, an 8 $ms$ decryption duration remains a negligible fraction of the total execution time. When summing up the times for all tasks in \sysname, the total execution time of \sysname is just 511 $ms$ on average. However, the actual authentication duration is longer, extending to approximately 650 $ms$, when the feature extraction and network delivery times are considered. These results underscore that \sysname is still well-suited for practical user authentication services, despite the integration of HE.

\section{Ablation Study}
\label{sec:ablation_study}

An ablation study was conducted to evaluate the impact of key components in \sysname.

\textbf{FC-16 Evaluation after Encryption.} The input size for FC-16 (25,088) is substantially larger than for FC-1 (16), resulting in computational overhead. To optimize the computation overhead of these two FC layers, the first fully connected layer (FC-16) is computed in plaintext on the client, while the second fully connected layer (FC-1) is processed post-encryption on the server. This approach reduces the total evaluation time from 15,096 seconds to just 0.65 seconds.

\textbf{Use of Compression Method.} We evaluate the efficiency of our new compression technique. Consider $N$ as the count of users (or feature vectors) stored on the server. When $N$ surpasses 512, the server conducts $\lceil N/512 \rceil$ similarity checks, as each ciphertext holds a maximum of 512 feature vectors. Without this compression, separate ciphertexts would store each similarity check result, necessitating the server to send back $\lceil N/512 \rceil$ ciphertexts to the client for authentication. However, with our compression method, $\lceil N/512 \rceil$ ciphertexts can be merged into one if $N$ is at most 8,192 since the authentication result occupies just one slot. For $N$ greater than 8,192, ciphertexts reduce to $\lceil N/8,192 \rceil$ instead of $\lceil N/512 \rceil$. This technique keeps ciphertext size constant for up to 8,192 users or feature vectors. Implementing this method, we successfully reduced the ciphertext size from 2.6 MB to 0.26 MB for a 5,000 input test dataset.

\textbf{Use of Clusters.} The cluster architecture considerably impacts \sysname's execution time. In \sysname, a ciphertext can concurrently compare 512 fingerprints using SIMD operations. Hence, for 5,000 registered fingerprints, 10 similarity-matching comparisons, calculated as 10 = $\lceil5,000/512\rceil$, are required. This process is efficiently distributed across clusters for load-balancing. Table~\ref{table:autmentation_time_depending_to_clusters} illustrates the total authentication time for \sysname, varying from 1 to 3 clusters, with 5,000 fingerprints registered.

\begin{table}[b]
\centering\normalsize
\begin{tabular}{|c|c|c|c|}
\noalign{\smallskip}\noalign{\smallskip}\hline
\# of clusters & 1 & 2 & 3 \\
\hline\hline
Time & 1,334.4 & 790.5 & 650.2 \\
\hline
\end{tabular}
\caption{Total authentication time (in $ms$) of \sysname with different cluster counts for 5,000 registered fingerprints.}
\label{table:autmentation_time_depending_to_clusters}
\end{table}

Using a single cluster, \sysname's authentication time is 1,334.4 ms, where one server handles all 10 comparisons. With two clusters, the load is split, with each handling 5 comparisons. In a three-cluster setup, two clusters manage 3 comparisons each, and the third handles 4, optimizing load distribution. Consequently, employing three clusters cuts the operation time by nearly half.

\textbf{Optimization of Feature Vector Size.} The size of the output feature vector impacts the computation time of the HE-based FC layer. The total time for \sysname increased from 0.65 seconds with 16 features to 1.81 seconds with 64 features.

\textbf{Optimization of CNN Model Architecture.} To achieve a balance between performance and the overhead of HE, various neural network depths were tested on the PolyU dataset. The goal was to maintain high authentication accuracy while reducing model complexity. A 5-layer configuration is recommended, as it attained the highest F1 score of 93.8\%, compared to 92.4\% for the 4-layer and 89.3\% for the 6-layer configurations.

\section{Security Analysis} 

We consider a curious server for the adversary model. The server can only see the ciphertext corresponding to the encrypted feature vector of a fingerprint received from the client and the ciphertext corresponding to the encrypted authentication result. In this section, we prove the security of \sysname against a curious server using the simulation-based security in the semi-honest setting, which is widely employed to prove the security of protocols \cite{b30:chandran2022simc}.

We use $\mathcal{C}$ to represent a trusted client and $\mathcal{A}$ to represent an adversarial server. $\mathcal{A}$ wants to obtain information about the user's fingerprint data. For the proof, we generate a simulator $\mathcal{S}$ against the adversary $\mathcal{A}$ as follows.

\medskip\noindent\textbf{The Simulator.} When $\mathcal{C}$ encrypts the feature vector of a fingerprint image using the encryption key, the simulator generates encryption of 0s instead of the embedded feature vector. Note that the simulator has access to the public encryption key, which allows encrypting of any data.

Now, we define the following two games for the \sysname framework $\pi$. 

\begin{itemize}
    \item The game $REAL_{(\pi, \mathcal{C}, \mathcal{A})}$: The client $\mathcal{C}$ encrypts the feature vector extracted from a user's fingerprint and transfers it to the adversary $\mathcal{A}$.
    \item The game $IDEAL_{(\pi, \mathcal{S}, \mathcal{A})}$: The simulator $\mathcal{S}$ encrypts a vector, which is entirely zero-populated, instead of the feature vector of the fingerprint and transfers it to the adversary $\mathcal{A}$.
\end{itemize}

We can prove the computational indistinguishability between $REAL$ and $IDEAL$ games as follows.

\newtheorem{claim}[theorem]{Claim}
\begin{theorem}
\label{theorem:theorem1}
$REAL_{(\pi, \mathcal{C}, \mathcal{A})}$ and $IDEAL_{(\pi, \mathcal{S}, \mathcal{A})}$ are computationally indistinguishable.
\end{theorem}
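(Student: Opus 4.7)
The plan is to reduce Theorem~\ref{theorem:theorem1} to the IND-CPA security of the underlying CKKS encryption scheme, since the statement is essentially a restatement of semantic security in the simulation paradigm. The only message transmitted from $\mathcal{C}$ to $\mathcal{A}$ in either game is a single ciphertext: in $REAL$ it is $Enc(v)$ for the extracted feature vector $v$, while in $IDEAL$ it is $Enc(\mathbf{0})$ for a same-length zero vector produced by $\mathcal{S}$. All subsequent server-side operations described in Section ``Fingerprint Authentication'' are deterministic, publicly specified functions of this ciphertext together with the public evaluation material, so $\mathcal{A}$'s entire view in each game is simulable from that single ciphertext alone.

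First I would pin down $\mathcal{A}$'s view: in both worlds $\mathcal{A}$ holds the public key, the Galois key, the relinearization key --- all generated independently of the input message --- and the challenge ciphertext. Because the key-generation transcript is identically distributed across $REAL$ and $IDEAL$, any distinguishing power must come from the ciphertext itself. Next I would argue by contraposition: assuming a PPT distinguisher $\mathcal{D}$ achieves non-negligible advantage between the two games, I would construct a PPT IND-CPA adversary $\mathcal{B}$ against CKKS that uses $\mathcal{D}$ as a subroutine. Given a public key (and evaluation keys) from the IND-CPA challenger, $\mathcal{B}$ sets $m_0 = v$ and $m_1 = \mathbf{0}$, submits $(m_0, m_1)$, receives $c^{*} = Enc(m_b)$, forwards the tuple (public key, evaluation keys, $c^{*}$) to $\mathcal{D}$, and outputs $\mathcal{D}$'s guess. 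The advantage of $\mathcal{B}$ is exactly that of $\mathcal{D}$, contradicting the IND-CPA security of CKKS under the parameter setting cited in the paper (polynomial modulus degree $d = 16{,}384$, $\approx 192$-bit security~\cite{rahulamathavan2022privacy}).

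The main obstacle will be justifying that publishing the Galois and relinearization keys alongside the public key does not break the reduction. These keys are effectively encryptions, under the same secret key $s$, of quantities derived from $s$ (e.g., $s^{2}$ and Galois-conjugated shifts of $s$), so their safety relies on a circular-security assumption that is standard in the HE literature but formally outside plain IND-CPA. I would address this by invoking IND-CPA security of the full CKKS public bundle --- public key together with evaluation keys --- as the precise hardness assumption used, and by noting that this is exactly the assumption under which the CKKS scheme is deployed in SEAL. With that assumption in place, the reduction above yields $\mathcal{B}$ with advantage equal to $\mathcal{D}$'s, so no such $\mathcal{D}$ exists and computational indistinguishability of $REAL_{(\pi, \mathcal{C}, \mathcal{A})}$ and $IDEAL_{(\pi, \mathcal{S}, \mathcal{A})}$ follows.
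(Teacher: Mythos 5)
Your proposal is correct and follows essentially the same route as the paper: both arguments observe that the adversary's view in the two games differs only in the single ciphertext (real feature vector vs.\ all-zero vector) and reduce the distinguishing advantage to the IND-CPA security of CKKS under RLWE hardness. Your version is in fact more careful than the paper's one-paragraph proof, since you spell out the explicit reduction and flag the circular-security assumption needed once the Galois and relinearization keys are published, a point the paper's proof leaves implicit.
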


\begin{proof}
The two games differ only in one aspect is where, in $REAL_{(\pi, \mathcal{C}, \mathcal{A})}$,  $\mathcal{A}$ receives a ciphertext of feature vector of real user's fingerprint and, in $IDEAL_{(\pi, \mathcal{S}, \mathcal{A})}$, $\mathcal{A}$ receives a ciphertext that is generated by encrypting the vector containing all zero elements. Regardless, $\mathcal{A}$ cannot distinguish two ciphertexts computationally because of the indistinguishability chosen plaintext attack (IND-CPA) security~\cite{cheon2020remark} about HE. Both ciphertexts are computationally indistinguishable from the uniform random variable over the ciphertext space under the assumption of the hardness of the ring learning with errors (RLWE) problem~\cite{lyubashevsky2013ideal}. 
\end{proof}

According to the claim, we can conclude that the difference in advantage between these two games is negligible. Therefore, $\mathcal{A}$ cannot obtain any information about the user's fingerprint. Consequently, \sysname achieves security against the curious server. 

Theorem \ref{theorem:theorem1} states that the difference in advantage between the two games is negligible. This means that $\mathcal{A}$ cannot obtain any significant information about the user's fingerprint. Therefore, \sysname achieves security against the curious server.

%------------------------------------------------------------------------------------------------------------
\section{Conclusion}
%------------------------------------------------------------------------------------------------------------
As the prevalence of AI services grows, there are growing concerns about data privacy. Therefore, we must not only focus on improving the efficiency of machine learning models but also on ensuring that user data is processed securely and with privacy in mind. We introduce \sysname, a distributed machine learning framework specifically designed for privacy-preserving fingerprint authentication using HE. \sysname's optimized design enables efficient and accurate authentication, even in the presence of HE. Our findings provide essential guidance for the implementation of secure and privacy-preserving AI services.

%------------------------------------------------------------------------------------------------------------
\section{Acknowledgements}
%------------------------------------------------------------------------------------------------------------
The authors would thank anonymous reviewers. Hyoungshick Kim is the corresponding author. This work was supported by NAVER Cloud, the Korea Internet \& Security Agency (KISA) grant (No. 1781000003, Development of a Personal Information Protection Framework for Identifying and Blocking Trackers) and the Institute for Information \& communication Technology Planning \& Evaluation (IITP) grants (No. 2022-0-01199, Graduate School of Convergence Security, and No. 2022-0-00688, AI Platform to Fully Adapt and Reflect Privacy-Policy Changes). 
\bibliography{aaai24}

\end{document}